\pdfoutput=1
\newif\ifFull
\Fulltrue
\ifFull
\documentclass[11pt]{article}
\usepackage[margin=1.25in]{geometry}
\else
\documentclass{llncs}
\renewenvironment{proof}{\noindent{\bf Proof:}}{\hspace*{\fill}\qed\bigskip}
\fi

\usepackage{graphicx,wrapfig}
\usepackage{amsmath,amssymb}
\usepackage{enumerate,verbatim}
\ifFull
\usepackage{mathptm}
\usepackage{amsthm}
\fi
\usepackage{hyperref}
\usepackage{cite}

\ifFull
\newtheorem{theorem}{Theorem}[section]
\newtheorem{lemma}[theorem]{Lemma}

\fi


\title{Confluent Hasse diagrams}
\ifFull
\author{David Eppstein \qquad Joseph A. Simons \\ 
 \\
 Department of Computer Science, University of California, Irvine, USA.}
\else
\author{David Eppstein \and Joseph A. Simons}
\institute{Department of Computer Science, University of California, 
	   Irvine, USA.}
\fi

\begin{document}

\maketitle

\begin{abstract}
We show that a transitively reduced digraph has a confluent upward drawing if
and only if its reachability relation has order dimension at most two. 
In this case, we construct a confluent upward drawing with $O(n^2)$ features, in an 
$O(n) \times O(n)$ grid in $O(n^2)$ time. For the digraphs representing series-parallel partial orders we show how to construct a drawing with $O(n)$ features in an $O(n) \times
O(n)$ grid in $O(n)$ time from a series-parallel decomposition of the partial
order. Our drawings are optimal in the number of confluent junctions they use.
\end{abstract}

\pagestyle{plain}

\section{Introduction}
One of the most important aspects of a graph drawing is that it should be readable: it should convey the structure
of the graph in a clear and concise way. Ease of understanding is difficult to quantify, so various
\ifFull
proxies for readability have been proposed; one of the most prominent is the number of edge
crossings.
That is, we should minimize the number of edge crossings in our drawing (a
planar drawing, if possible, is ideal), since crossings make drawings harder
to read. 
Another measure of readability
is the total amount of ink required by the drawing~\cite{Aeschlimann1992}. This measure can be
formulated in terms of Tufte's ``data-ink
ratio''~ \cite{Jourdan1995,Tufte1983}, according to which a large proportion of the ink on any
infographic should be devoted to information.
\else
proxies for it have been proposed, including the number of
crossings and the total amount of ink required by the drawing~\cite{Aeschlimann1992,Jourdan1995}.
\fi
Thus given two different ways to
present information, we should choose the more succinct and crossing-free presentation.  

\begin{figure}[h]
\centering\includegraphics[width=.5\textwidth]{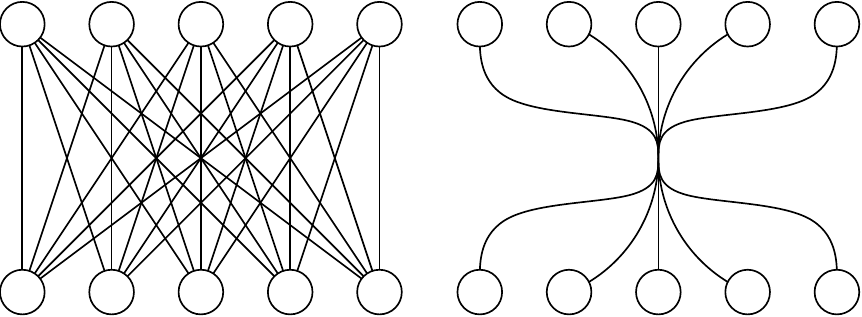}
\caption{\label{fig:k55}Conventional and confluent drawings of $K_{5,5}$.}
\end{figure}

Confluent drawing\cite{DicEppGoo-JGAA-05,EppGooMen-GD-05, EppGooMen-Algo-07,HirMeiRap-GD-07,HuiPelSch-Algo-07,QueAnc-GD-10,EppHolLof-GD-13} is a style of graph drawing in which multiple edges are combined into shared tracks, and two vertices are considered to be adjacent if a smooth path connects them in these tracks (Figure \ref{fig:k55}). This style was introduced to reduce crossings, and in many
cases it will also improve the ink requirement by representing dense
subgraphs concisely.  However, it
has had a limited impact to date, as there are only a few specialized graph
classes for which we can either guarantee the existence of a confluent drawing
or test for confluence efficiently.
A closely related graph drawing technique, edge bundling~\cite{Hol-TVCG-06,Gansner2011}, differs
from confluence in emphasizing the visualization of high level graph
structure, but does not necessarily seek to reduce the number of edge crossings.

\emph{Hasse diagrams} are a type of upward drawing of
transitively reduced directed acyclic graphs (DAGs) that have been used since the late
19th century to visualize partially ordered sets.
To maximize the readability of Hasse diagrams,
as with other types of graph drawing, we would like to draw them without crossings.  Thus upward planar graphs (DAGs that can be drawn so
that all edges go upwards and no edges cross) have been an important thread of
research in graph
drawing.
A DAG is upward planar if and
only if it is a subgraph of a planar st-graph, i.e. a planar DAG with one source and one sink, both on the
outer face~\cite{DiBattista1988}.
Testing upward planarity is NP-complete~\cite{Garg2002} but for DAGs with a single
source or a single sink it may be
tested efficiently~\cite{HutLub-SJC-96,Bertolazzi1998}.
However, many DAGs (even planar DAGs such as the one in Figure \ref{fig:simple}) are not upward planar.

\begin{figure}[t]
\centering\includegraphics[width=0.5\linewidth]{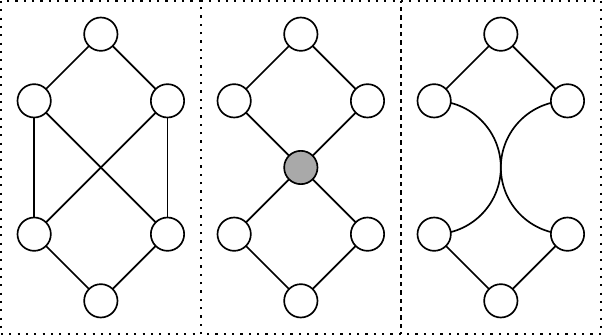}
\caption{
	\label{fig:simple}
	A simple DAG $P$ (left) that is not upward planar, although its underlying graph is planar. Its Dedekind--MacNeille completion (middle) is  upward planar, with an
	added element (shaded). Replacing that element
	with a junction creates an upward confluent drawing of $P$ (right).
	}
\end{figure}

In this paper, we bring these threads together by finding efficient algorithms for upward confluent drawing of transitively reduced DAGs.
We show that a graph has an upward confluent  drawing if and only if it represents a partial order $P$ with order dimension at most two, and that these drawings correspond to two-dimensional lattices containing $P$. We construct the smallest lattice containing~$P$ (its Dedekind--MacNeille completion) in worst-case-optimal time, and draw it confluently in area $O(n^2)$, using as few confluent junctions as possible. For series-parallel partial orders, the time and number of junctions can be reduced to linear.

\ifFull
Summarizing, we have the following new results:
\begin{itemize}
\item
We characterize the transitively reduced digraphs with
confluent upward drawings: they are the digraphs whose
reachability relation has order dimension at most two.
\item
We construct a confluent upward drawing for any
transitively reduced digraph that has one, by constructing the
Dedekind--MacNeille completion of the reachability poset and creating
confluent junctions corresponding to the added elements in the
completion. Our drawings have $O(n^2)$ junctions and track segments and can be embedded
into an $O(n) \times O(n)$ grid in $O(n^2)$ time. The number of junctions is the minimum possible for any confluent upward drawing of the given digraph.
\item
For series-parallel partial orders and the corresponding
transitively reduced graphs, we show how to construct a confluent drawing with
$O(n)$ elements, in an $O(n) \times O(n)$ grid, in $O(n)$ time given a
series-parallel decomposition of the partial order.
\end{itemize}
\fi

\section{Preliminaries}
\subsection{Posets and Lattices}
Here we review some basic definitions and notation concerning posets and lattices. 
For more, see e.g.~\cite{Birkhoff1967,t-cpos-92}.
A partially ordered set (partial order, or poset) $P = (V, \leq)$ is a set $V$ with a
reflexive, antisymmetric, and transitive binary relation $\leq$. We adopt the convention that $n = |V|$
unless otherwise stated. We also use $a < b$ to denote that $a
\leq b$ and $a \neq b$. 
We say that $a$ \emph{covers} $b$ in $P$ if $b < a$ and
$\nexists x \in P \mbox{ such that } b < x < a$.
Elements $a,b \in P$ are \emph{comparable} if $a \leq b$ or $b \leq a$;
otherwise, we write $a || b$ to indicate that they are \emph{incomparable}. A \emph{total order} or \emph{linear order} is a partial order in which every pair of elements in $P$ is comparable. If $R$ is a set of linear orders $R_i$, we can define a poset $P$ as the intersection of $R$: that is,  $a \leq b$ in $P$ if
and only if $a \leq b$ in every linear order $R_i$.  If $P$ can be defined from $R$ in this way, then $R$ is called
a \emph{realizer} of $P$. Every partial order $P$ has a realizer; the \emph{dimension} $\dim(P)$   is the 
smallest number of linear orders in a realizer of~$P$.

If $X \subseteq P$ is any subset of $P$, then an element $a \in P$ is called a
\emph{lower bound} of $X$ if it is less than or equal to every element of $X$.  Similarly, an element
$b$ is called an \emph{upper bound} of $X$ if it is greater than or equal to every element of $X$.
If $X$ has a lower bound~$a$ that belongs to $X$ itself, then $a$ is the (unique) \emph{least element} in $X$, and similarly if $X$ has an upper bound~$b$ that belongs to~$X$ then $b$ is the (unique) \emph{greatest element} in~$X$.
If the set $A$ of lower bounds of $X$ has a greatest element~$a$, then $a$ is the \emph{greatest lower bound} or \emph{infimum} of $X$, and similarly if the set $B$ of upper bounds of~$X$ has a lowest element $b$ then $b$ is the \emph{least upper bound} or \emph{supremum} of~$X$.
If $P$ itself has an infimum or a supremum, these elements are typically denoted by $0$~and~$1$ respectively.
If $P$ contains both an infimum and a supremum, it is said to be \emph{bounded}.

A poset $L$ is a \emph{lattice} if for every pair of elements $x$ and $y$ in $L$ the set $\{x,y\}$
has both an infimum and a supremum. In this context, the supremum of $\{x,y\}$ is
called the \emph{meet} of $x$ and $y$ and denoted $x \wedge y$, and similarly
the infimum is called the \emph{join} and denoted $x \vee y$. A lattice $L$ is
\emph{complete} if every subset of $L$ has an infimum and supremum in $L$. Every
finite lattice is complete and bounded.

\subsection{Hasse Diagrams and Upward Planarity}
Every poset $P = (V, \leq)$ can be represented by a directed acyclic
graph $G$ which has a vertex for each element in
$P$ and an edge $uv$ for each pair $(u,v)$ with $u\le v$ in $P$.
However, when we draw a poset it is more common to draw a different DAG, the \emph{transitive
reduction} $G'$ of $G$, in which there
is an edge from $u$ to $v$ in $G'$ if and only if $v$ covers $u$ in $P$. A \emph{Hasse
diagram} of $P$ 
is an upward drawing of $G'$, meaning that the $y$ coordinate of the head of each edge is
greater than the $y$ coordinate of the tail of each edge, and each edge is a
$y$-monotone curve, so that the drawing
``flows'' upward from smaller elements to larger elements. In a Hasse diagram,
we do not need to explicitly draw the edges as directed edges: the direction of
an edge is
represented implicitly by the relative position of its endpoints. There is an upward
path from $a$ to $b$ in a Hasse diagram of $P$ if and only if $a \leq b$.
A poset is \emph{planar} if it has a Hasse diagram that is upward planar, 
i.e. its transitive reduction has an upward drawing in which
none of the edges intersect except at a shared vertex. 

A finite lattice is planar if and only if its
transitive reduction is a planar st-graph, 
a DAG which contains exactly one source $s$ and one sink $t$ both of which belong to the outer face of an upward planar drawing~\cite{Platt1976}. More generally, any DAG is upward planar if and only if it is a subgraph of a planar
st-graph~\cite{DiBattista1988}. In the other direction, every planar finite bounded poset must be a lattice~\cite{Baker1972,Birkhoff1967,KellyRival1975}. This implies that a two-dimensional bounded poset that is not a lattice (such as the one on the left of Figure \ref{fig:simple}) cannot have an upward planar drawing, and that planarity (a crossing-free drawing) and two-dimensionality (realization by a pair of linear orders) are distinct for non-lattice posets.

\subsection{Lattice Completion of a Poset}

The Dedekind--MacNeille completion of a poset
\ifFull
$P$ (also called the normal completion or the completion)
\else
$P$
\fi
is the smallest complete lattice containing~$P$~\cite{Mac-TAMS-37}.
\ifFull
Its construction is based on  Dedekind's construction of
the real numbers as Dede\-kind cuts of rational numbers.
\fi
For any subset $X$ of $P$, let $X^-$ and $X^+$ denote the set of lower bounds and
upper bounds of $X$ respectively. A \emph{cut} of $P$ is a pair $A,B \subseteq
P$ such that $A^+ = B$ and $A = B^-$;  the completion of $P$ has these cuts as its elements. The completion is partially ordered by set containment: if $(A,B)$ and $(C,D)$ are cuts, then $(A,B)\le (C,D)$ if and only if $A\subseteq C$ and $B\supseteq D$.
The element of the completion corresponding to an element $x$ of $P$ is the cut $(\{x\}^-,\{x\}^+)$, and the new elements added to $P$ to make it into a lattice come from cuts $(A,B)$ for which $A\cap B=\emptyset$.
The completion automatically has the same dimension as the partial order from which it was constructed~\cite{Nov-MA-69}.

Ganter and Kuznetsov~\cite{Ganter1998}
give a stepwise algorithm for constructing the completion of $P$.
Given a poset $P$ and its completion $L$ they show how to complete a one-element
extension of $P$ in time $O(|L|\cdot |P| \cdot \omega(P))$, where $\omega(P)$ denotes
the width of $P$. To compute the completion of a large poset, they begin with a single-element poset (whose completion is trivial) and use this subroutine to add elements one at a time;
therefore, the total time is $O(|L|\cdot |P|^2 \cdot \omega(P))$.
Nourine and Raynaud~\cite{Nourine1999}
give an algorithm with running time $O((|P| + |B|)\cdot |B| \cdot |L|)$ where
$B$ is a \emph{basis} of $P$ (a set of subsets of $P$ which generate $L$).
As part of our drawing algorithm, we improve these results in the case of two-dimensional posets: we show for such sets how to construct the completion in time $O(|P|^2)$, optimal in the worst case since (as we also show) there exist two-dimensional posets whose completion has a quadratic number of elements.

\subsection{Confluent Drawing}
Confluent drawing is a technique for drawing non-planar diagrams without crossings~\cite{DicEppGoo-JGAA-05,EppGooMen-GD-05,EppGooMen-Algo-07,HirMeiRap-GD-07,HuiPelSch-Algo-07,QueAnc-GD-10,EppHolLof-GD-13} by merging together groups of edges and drawing them
as \emph{tracks} that, like train tracks, meet smoothly at junction points but  do not cross. 
A \emph{confluent drawing} consists of a set of labeled points (\emph{vertices} and \emph{junctions}) and curves (\emph{track segments}) in the Euclidean plane, such that the two endpoints of each track segment are vertices or junctions, such that no two track segments intersect except at a shared endpoint, and such that all track segments that meet at a junction share a common tangent line at that point. The graph represented by a confluent drawing has as its vertices the vertices of the drawing; two vertices $u$ and $v$ are adjacent if and only if there is a smooth curve in the plane from $u$ to $v$ that is a union of track segments and that does not pass through any other vertex. (Some papers on confluence require that this curve also be non-self-intersecting but that requirement is irrelevant for upward drawings since monotone curves cannot self-intersect.)
An undirected graph $G$ is \emph{confluent} if and only if there exists a confluent drawing that represents it.

We define a \emph{confluent diagram} of a poset to be a drawing of its transitive reduction in a way that is both confluent and upwards. In other words, if $G$ is a directed acyclic graph representing a poset $P$, 
then we define a confluent diagram of $P$ to be an upward confluent drawing of the
transitive reduction of $G$ in which all tracks are oriented upwards (monotonic in the $y$ direction), and therefore all smooth curves passing through the tracks are similarly oriented. 
For each pair of elements $a,b \in P$, the drawing should have a smooth track from $a$ upwards to $b$ if
and only if $a$ is covered by $b$. We also require that for each source there exists an unbounded $y$-monotone curve downwards that does not cross the diagram -- that is, that each source can be seen from below -- and symmetrically that each sink can be seen from above. In the application to visualization of partial orders, this is a natural restriction as it makes the minimal and maximal elements easy to find in the drawing.

%

\section{Drawing Posets of Dimension Two}\label{sec:alg}
Let $G$ be a poset with dimension at most two. 
We now describe an $O(n^2)$ algorithm to embed a confluent diagram of $P$ in an
$O(n) \times O(n)$ grid.
That is, we will generate an upward confluent drawing of the transitive
reduction of a DAG representing $P$ such that each vertex in the drawing has
integer coordinates.  

\begin{figure}[t]
\centering\includegraphics[width=0.9\linewidth]{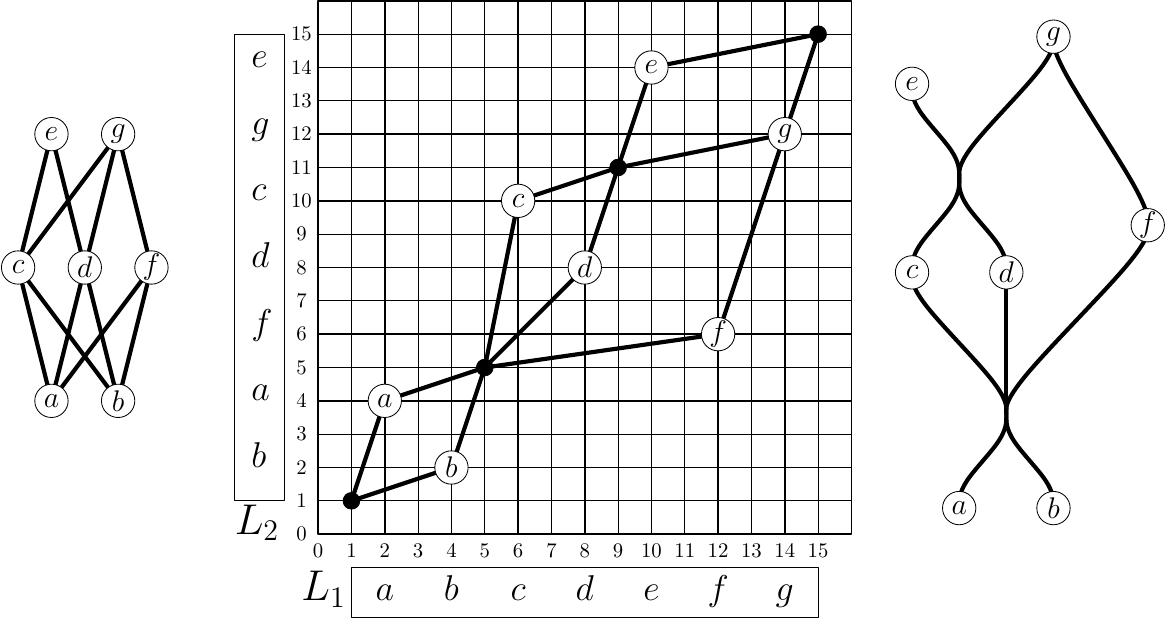}
\caption{Example of our algorithm. \emph{Left}: Input poset~$P$.
\emph{Middle}: Grid embedding with added points and dominance pairs.
\emph{Right}: Completion points replaced by confluent junctions and rotated $45^\circ$.}
\label{fig:algEx}
\end{figure}

Our algorithm has three phases. In the first phase, we embed the elements of $P$
in a $(2n+1) \times (2n+1)$ grid. Recall that since $P$ has dimension two, it is
realized by two linear orders, which correspond to two different total orderings of
the same $n$ elements in $P$.  Thus, the first steps of our algorithm are:
\begin{enumerate}
\item[1.]
    \begin{enumerate}
	\item
		Find two linear orders $L_1$ and $L_2$ that realize
		$P$. This can be done
		 in $O(n^2)$ time from any graph whose transitive closure is $P$  by Algorithm 1 of
         Ma and Spinrad~\cite{MaSpinrad}.
	\item
		For each element $p$ of $P$, having position $i$ in $L_1$ and $j$ in
        $L_2$ with $1\le i,j \le n$, place a vertex representing $p$ in
		the grid with coordinates $(2i, 2j)$.	
    \end{enumerate}
\end{enumerate}
After this step, the even rows and columns in the grid each contain
exactly one element of $P$, and the dominance relationship of these points
corresponds to the order of the elements in $P$. Recall that for two
elements $p$ and $q$ in the plane, $p$ \emph{dominates}
$q$ if and only if $p_i \geq q_i$ for each coordinate $i$ and $p \neq q$.

In the second phase, we insert additional points representing elements of
the completion of $P$;  these completion nodes correspond to confluent junctions
in the confluent diagram of $P$. We defer to Section~\ref{sec:correctness} the proof that the dominance order on the points generated in the first two phases gives the completion of~$P$.
\begin{enumerate}
\item[2.]
	For each pair of odd indices $(i,j) \in [3, 2n - 1]^2,$ 
	insert a junction in
	the grid with coordinates $(i, j)$ if all of the following four
	conditions hold: 
	\begin{itemize} 
	\item 
		The poset point with $x$-coordinate $i-1$ has $y$-coordinate less than $j-1$.  
	\item 
		The  point with $x$-coordinate $i+1$ has $y$-coordinate greater than $j+1$.  
	\item 
		The  point with $y$-coordinate $j-1$ has $x$-coordinate less than $i-1$.  
	\item
		The  point with $y$-coordinate $j+1$ has $x$-coordinate greater than $i+1$.
	\end{itemize} 
	In addition if $P$ does not already have a least or a greatest
element, then insert invisible points at $(1,1)$ and $(2n+1,2n+1)$
respectively.
\end{enumerate}

In the third phase, we generate the segments of the confluent
diagram. These segments correspond to direct dominance pairs
of  points from the first two phases. It is possible to find all dominance pairs
in a set of $N$ points in time $O(N \log N + k)$~\cite{GutHarNur-Algs-89} where
$k$ is the number of dominance pairs, but in our case
\ifFull
$N$ may be too large, so
\fi
this would only lead to an $O(n^2 \log n)$ time bound. Instead, we leverage the fact that the vertices are embedded in an $O(n) \times
O(n)$ grid, and use the following $O(n^2+k)$ time method to generate dominance
pairs using a stack-based algorithm related to Graham scan within each row. We
prove later that the diagram is planar and therefore that the number $k$ of
dominance pairs is $O(n^2)$.
\begin{enumerate}
\item[3.]
    For each column $c$ we maintain a value $t_c$, the topmost element seen so far in column~$c$.
    Initialize each $t_c$ to \texttt{None}.
      
    Then, for each row $r$ from $1$ to $2n+1$:
    \begin{enumerate}
        \item Initialize an empty stack $S$.
        \item For each column $c$ from $1$ to $2n+1$:
            \begin{enumerate}
                \item If there is a vertex or junction $p$ at $(r,c)$, add an edge from every element of $S$ to $p$, add an edge from  $t_c$ to $p$ (if $t_c$ is is not \texttt{None}), and set $t_c$~to~$p$.
                \item If $t_c$ is is not \texttt{None}, pop all items from $S$ whose row number is less than or equal to the row number of $t_c$, and push $t_c$ onto $S$.
            \end{enumerate}
    \end{enumerate}
\end{enumerate}

Thus we have computed the coordinates of all elements, confluent junctions, and
edges in the confluent diagram. When we
render the drawing, we rotate it $45^\circ$ counterclockwise to make it upward confluent (Figure \ref{fig:algEx}).

\begin{figure}[t]
\centering
\includegraphics[width=2.25in]{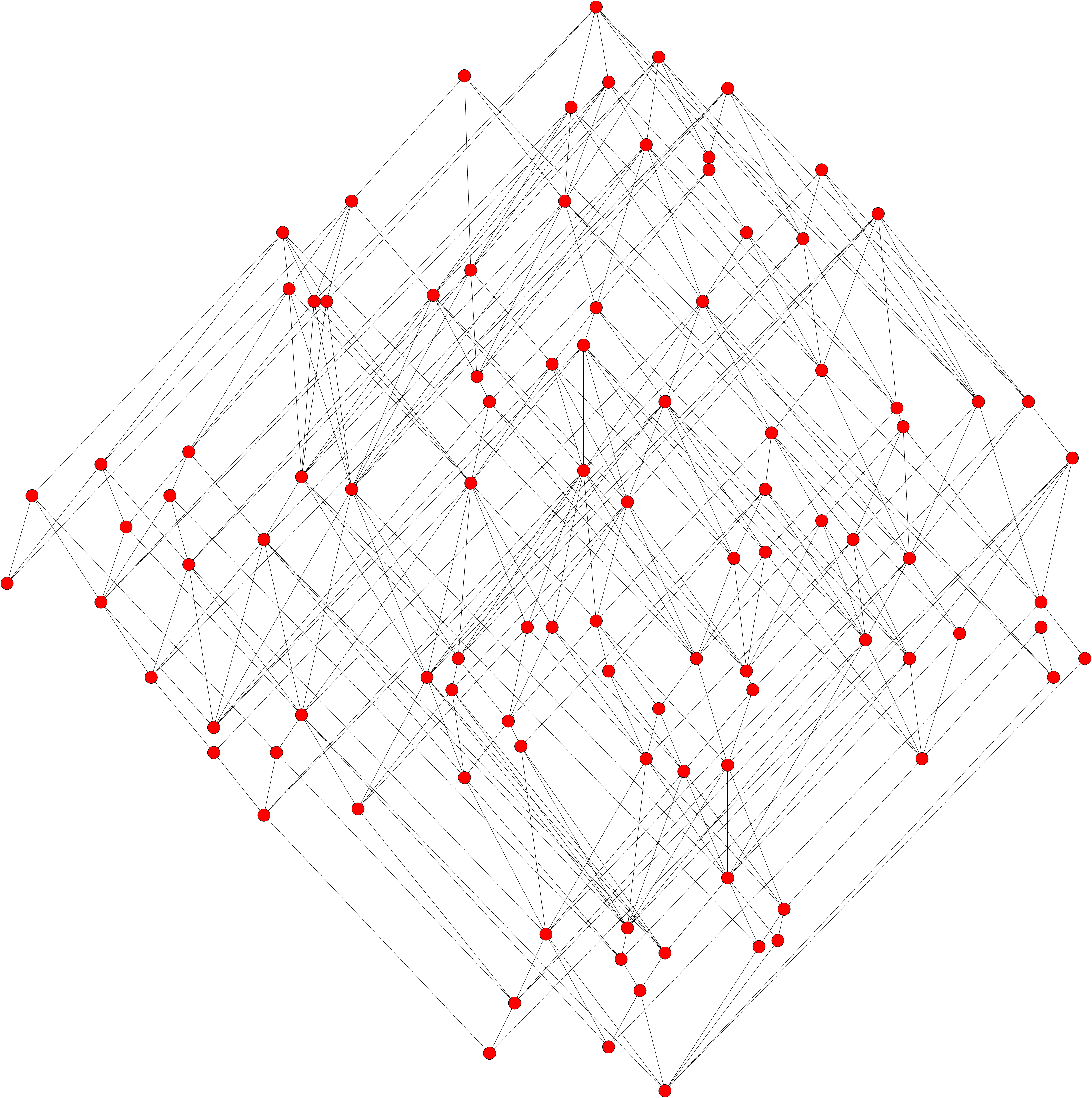}\quad
\includegraphics[width=2.25in]{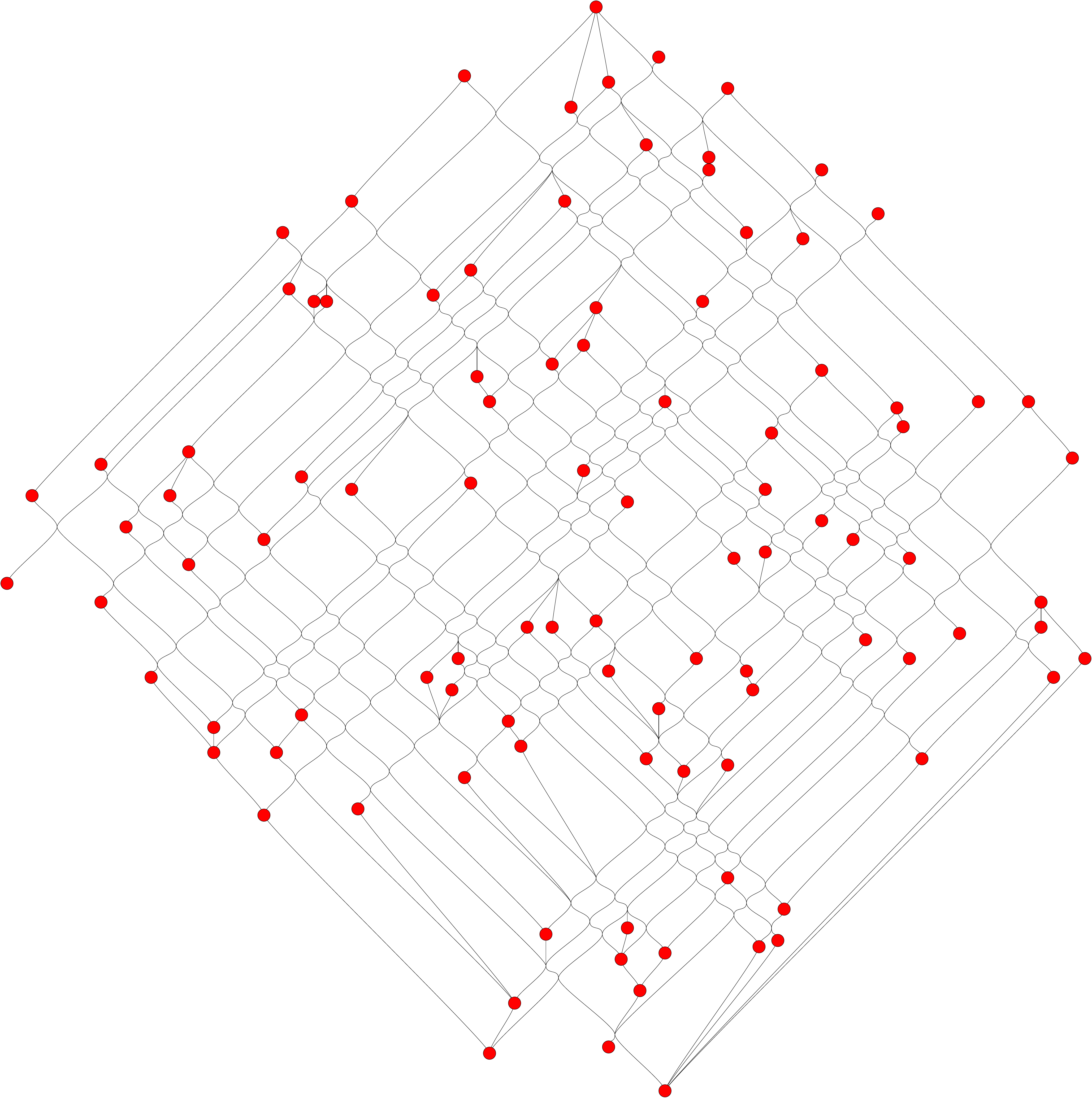}
\caption{A 100-element partially ordered set, the intersection of two random permutations, drawn  as a conventional Hasse diagram with crossings (left), and as a confluent Hasse diagram (right).}
\label{fig:implementation} 
\end{figure}

Examples of non-confluent and confluent drawings of the same 100-element set are shown in Figure~\ref{fig:implementation}. Our Python implementation renders the confluent track segments as cubic B\'ezier curves with control points at a small fixed distance directly above and below each confluent junction. Two such curves cannot cross each other: for pairs of edges that do not share an endpoint, this follows from the fact that the convex hulls of the control points are disjoint and that the curves lie within the convex hulls, while for pairs of curves that share an endpoint it follows from the fact that the two curves are images of each other under an affine transformation of the plane and that (for pairs of edges sharing an endpoint) the direction that any point on the curve is translated by this affine transformation is transverse to the tangent direction of the curve at that point.

If the input is provided as a realizer rather than as a graph, and its completion has few elements, then it is  possible to construct the diagram more efficiently. To do so, construct for each odd-indexed row or column of the integer grid an axis-parallel line segment that passes through a grid point if and only if that point meets two of the four conditions for adding a junction in phase~two of our algorithm. The junctions can be recovered as the intersections of these line segments, and we may compute the edges of the diagram using an output-sensitive algorithm for dominance pairs. By using integer searching data structures the total time for this algorithm may be reduced to $O((n+k)\log\log n)$, where $k$ is the number of confluent junctions; we omit the details.

\section{Algorithm Correctness and Minimality}
\label{sec:correctness}
In this section we prove that the algorithm of Section
\ref{sec:alg} is correct and has optimal running time. Our analysis also shows that a poset $P$ has a confluent diagram if and only if it has
dimension at most two.  

\begin{lemma}[Baker, Fishburn and Roberts~\cite{Baker1972}]
\label{lem:planarDim2}
Let $P$ be a bounded finite planar poset. Then $P$ is a lattice and has dimension at most 2.
\end{lemma}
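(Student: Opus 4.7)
The plan is to prove the two assertions --- that $P$ is a lattice and that $\dim(P)\le 2$ --- both from the upward planar embedding guaranteed by the hypothesis. Since $P$ is bounded and planar, the Hasse diagram of $P$ is a planar st-graph with $0$ as the unique source and $1$ as the unique sink, and I would fix an upward planar embedding with both $0$ and $1$ on the outer face.

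For the lattice claim I would fix arbitrary $x, y \in P$ and show that the set of common upper bounds has a unique minimum (the meet argument being dual). Non-emptiness is immediate since $1 \ge x,y$, so suppose for contradiction that two distinct minimal common upper bounds $u$ and $v$ exist. Minimality rules out a chain between $u$ and $v$ and also implies that the four upward paths $x \to u$, $x \to v$, $y \to u$, $y \to v$ meet only at their endpoints (any other shared internal vertex would splice together a chain forbidden by minimality). These paths realize a subdivision of $K_{2,2}$ inside the upward planar embedding; one then argues that the cyclic orders of incident path-ends at $x, y, u, v$ are incompatible under upward monotonicity and planarity --- any attempted routing forces either two paths to cross or induces a comparison between $u$ and $v$. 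The dual argument gives meets, so $P$ is a lattice.

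For the dimension claim, I would use the planar st-graph structure to define a consistent ``left of'' relation between incomparable pairs. The standard tool is the dual planar st-graph: the dual of a planar st-graph is again a planar st-graph whose topological sort orders the faces left to right, and by extension induces a left-right classification on pairs of incomparable vertices. This yields two linear extensions $L_1$ and $L_2$ of $P$, obtained by topologically sorting $P$ using the planar st-graph and its dual respectively. For any $x \le y$ in $P$, both extensions order $x$ before $y$; for any incomparable pair $x || y$, the left-right classification forces opposite orderings in $L_1$ versus $L_2$. Hence $L_1 \cap L_2 = P$, so $\dim(P) \le 2$.

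The main obstacle is formalizing the left-right classification on incomparable pairs and verifying that it descends to a genuine pair of linear extensions --- this is where the original Baker--Fishburn--Roberts argument (and the later treatment by Kelly and Rival) does its real work, since transitivity of ``left of'' across long upward paths must be checked carefully against the planar embedding. Once this formalism is in place, both the lattice conclusion and the two-dimensional realizer fall out in parallel from planar st-graph theory.
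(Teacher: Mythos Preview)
The paper does not prove this lemma; it is stated with attribution to Baker, Fishburn and Roberts and invoked as a known result, so there is no in-paper argument to compare against. Your sketch is essentially a reconstruction of the classical proof, and the dimension half---extracting two linear extensions from the primal and dual $st$-planar orders---is the standard construction.

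The lattice half, however, has a real gap. First, your justification that the four paths $x\to u$, $x\to v$, $y\to u$, $y\to v$ are internally disjoint is incomplete: a shared internal vertex $w$ on $x\to u$ and $y\to u$ does contradict minimality of $u$ (since $x,y\le w<u$), but a shared $w$ on $x\to u$ and $x\to v$ only yields $x<w<u,v$, which says nothing about $y$ and contradicts nothing. Second, and more importantly, even granting internal disjointness, a $K_{2,2}$ with sources $\{x,y\}$ and sinks $\{u,v\}$ \emph{is} upward planar on its own (place $u$ and $v$ at different heights on a common vertical line and route the four edges around them), so no contradiction can arise from those four paths alone. The obstruction only appears once $0$ and $1$ are brought in: it is the eight paths $0\to x$, $0\to y$, $x\to u$, $x\to v$, $y\to u$, $y\to v$, $u\to 1$, $v\to 1$, together with the requirement that $0$ and $1$ both lie on the outer face of an $st$-planar embedding, that force a crossing. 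With that correction the argument can be completed along the lines you indicate.
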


\begin{lemma}
\label{lem:junctionToLattice}
Let $P$ be a finite poset with a confluent Hasse diagram $D$.
Then $\dim(P) \leq 2$, and there exists a two-dimensional lattice $C$ 
containing $P$ such that the elements of $C\setminus P$ (other than the top and bottom element, if they do not belong to $P$) correspond one-for-one with the 
confluent junctions of $D$.
\end{lemma}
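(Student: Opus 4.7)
The plan is to realize the target lattice $C$ explicitly as the reachability order of a DAG built from $D$, and then to deduce that $C$ is a two-dimensional lattice by appealing to Lemma~\ref{lem:planarDim2}. First I would form a DAG $H$ whose nodes are the union of the vertices and junctions of $D$ and whose arcs are the track segments of $D$, each oriented in the direction of increasing $y$. Because tracks are $y$-monotone and meet only at shared endpoints, the embedding of $D$ already serves as an upward planar drawing of $H$, so $H$ is the Hasse diagram of a planar poset $C$ whose ground set partitions into the vertices of $P$ and the junctions of $D$.

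Next I would ensure $C$ is bounded so that Lemma~\ref{lem:planarDim2} applies. If $P$ lacks a $0$ or a $1$, the visibility clause in the definition of a confluent diagram supplies, for each source (resp.\ sink) of $D$, a disjoint $y$-monotone curve going downward (resp.\ upward) to infinity that avoids the rest of $D$. I can bundle these into a single new bottom (resp.\ top) element placed far below (above) the drawing, preserving upward planarity. The augmented diagram represents a bounded planar poset $\widehat{C}$, so Lemma~\ref{lem:planarDim2} gives $\dim(\widehat{C})\le 2$. Since restricting a realizer to any subset yields a realizer of that subset, we immediately obtain $\dim(P)\le\dim(\widehat{C})\le 2$.

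It remains to verify that the inclusion $P\subseteq C$ is order-preserving and that the junctions of $D$ biject (modulo the possibly added top and bottom) with the elements of $C\setminus P$. The bijection on underlying sets is immediate from the construction, since distinct junctions are distinct geometric points and hence distinct nodes of $H$, and they carry a different label from $P$-vertices. The forward direction of order preservation is also easy: every cover relation of $P$ is witnessed by a smooth track curve in $D$, which decomposes into a directed path in $H$.

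The main obstacle I anticipate is the reverse direction of order preservation, namely that promoting junctions to vertices does not introduce spurious comparabilities between $P$-elements. Given an $H$-path from $u\in P$ to $v\in P$, I would decompose it at the intermediate $P$-vertices it meets; between two consecutive such vertices the path visits only junctions. At each interior junction the common-tangent condition, together with $y$-monotonicity, guarantees that entering from one incident track below and exiting along one incident track above yields a smooth continuation. Hence each subpath is a smooth upward track curve between consecutive $P$-vertices, witnessing a cover chain in $P$ from $u$ to $v$. This gives $u\le v$ in $P$ and completes the argument.
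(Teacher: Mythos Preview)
Your proposal is correct and follows essentially the same approach as the paper: replace junctions by vertices, add a top and bottom element if needed, and invoke Lemma~\ref{lem:planarDim2} on the resulting bounded upward-planar (hence st-planar) drawing. You are in fact more careful than the paper, which simply asserts that the resulting lattice ``clearly contains $P$'' without verifying, as you do, that the common-tangent condition at junctions rules out spurious comparabilities, and which does not explicitly invoke the visibility clause to justify that the added top and bottom preserve upward planarity.
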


\begin{proof}
Replace the confluent junctions of $D$ with vertices, and re-interpret the confluent segments as edges between these vertices. If there is more than one minimal vertex of $P$, add a vertex below all minimal vertices, connected to the minimal vertices by upward edges, and similarly if there is more than one maximal vertex of $P$, add a vertex above all maximal vertices connected to them by edges. The modified drawing is st-planar and hence by Lemma~\ref{lem:planarDim2} represents a lattice, which clearly contains $P$.
\end{proof}

\begin{lemma}
\label{lem:completion}
Let $P$ be a finite poset with order dimension at most two, let $C$ be the completion of $P$, and let $S$ be the set of elements of $C\setminus P$ (other than the top and bottom element, if $P$ itself is not bounded). Then the elements of $S$ coincide with the junction points added in phase 2 of our algorithm, and the dominance ordering on these points coincides with the lattice ordering in $C$.
\end{lemma}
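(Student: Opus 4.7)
My plan is to prove the lemma in three steps: first verify that the phase~1 embedding faithfully represents $P$, then establish a bijection between the junctions produced in phase~2 and the non-principal cuts of $P$, and finally observe that the dominance order agrees with the cut order.

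For the first step, placing each $p \in P$ at grid position $(2p_1, 2p_2)$, where $p_i$ denotes $p$'s rank in $L_i$, makes the dominance order on grid points coincide with $P$. This is immediate from the definition of a realizer: $p \le q$ in $P$ iff $p$'s rank precedes $q$'s in each of $L_1$ and $L_2$. It follows that for any point $x$ in the grid---whether or not $x$ coincides with a poset element---the set $A_x$ of poset elements dominated by $x$ is a down-set of $P$ and the set $B_x$ of those dominating $x$ is an up-set.

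For the main step, I associate to each candidate odd-indexed position $(i,j)$ the pair $(A_{ij}, B_{ij})$ defined as above. I would verify that the four conditions of phase~2 are exactly equivalent to $(A_{ij}, B_{ij})$ being a cut of $P$ whose corresponding element is not already in $P$. Each condition constrains one of the four poset points lying on the adjacent even row or column $i \pm 1$, $j \pm 1$ to fall on the ``correct side'' of $(i,j)$; together, these four suffice to force $A_{ij}^+ = B_{ij}$ and $B_{ij}^- = A_{ij}$, because any would-be witness to a violation of these identities must occur on one of those four lines adjacent to $(i,j)$ (using that poset points occupy only even rows and columns). Since no element of $P$ occupies an odd grid coordinate, $A_{ij} \cap B_{ij} = \emptyset$, so the cut is non-principal. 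For the converse, given any non-principal cut $(A,B)$, I would recover the junction position by letting $i$ be one greater than the largest $x$-coordinate of any element of $A$ and $j$ analogously. The main obstacle I anticipate is showing that this assignment agrees with the symmetric formula in terms of the minimum coordinates in $B$, and that it yields an odd grid position in the required range $[3, 2n-1]$; this amounts to exhibiting a reflex ``staircase corner'' in the boundary between $A$ and $B$, and this is the step in which the two-dimensionality of $P$ enters essentially. The cases in which $P$ lacks a minimum or maximum element are handled separately by the invisible points added at $(1,1)$ and $(2n+1, 2n+1)$.

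For the final step, since $C$ is ordered by $(A,B) \le (A',B')$ iff $A \subseteq A'$, and since $A_{ij} \subseteq A_{i'j'}$ holds precisely when $(i,j)$ is dominated by $(i',j')$ in the grid, the dominance order on the junctions agrees with the cut order restricted to $S$. Extending this observation to the principal cuts $(\{x\}^-, \{x\}^+)$---which correspond one-for-one with the phase~1 grid points---identifies the full lattice order on $C$ with the dominance order on all points produced by phases~1 and~2, completing the proof.
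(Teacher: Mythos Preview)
Your proposal is correct and follows essentially the same approach as the paper's own proof: show that the four phase-2 conditions force $(A_{ij},B_{ij})$ to be a cut, and conversely that every non-principal cut $(A,B)$ determines a unique odd grid position satisfying the four conditions, with the order equivalence following directly.  The only cosmetic difference is that you locate the junction for a cut $(A,B)$ via $\max_A$ coordinates plus one, whereas the paper uses $\min_B$ coordinates minus one; the ``obstacle'' you flag---that these two formulas agree---is exactly the content of the paper's verification that the two $A$-side conditions hold at its chosen position (an element just inside $\min_B$ that failed to lie in $A$ would be forced into $A^+=B$, a contradiction), so the two arguments are really the same computation read from opposite ends.
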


\begin{proof}
In one direction, let $p$ be a junction point added in phase 2 of our algorithm, and $p^-$ and $p^+$ be the sets of points from phase 1 that are dominated by $p$ and that dominate $p$ respectively.
Then it follows from the four conditions according to which phase 2 adds a point that $(p^-,p^+)$ forms a cut in $P$. The equivalence of the dominance and lattice orderings on pairs consisting of a junction point and a point from $P$ follows immediately, and the same equivalence for pairs of junction points is also easy to verify.

In the other direction, we must show that we add a junction point for every element of $S$, that is, every cut $(L,U)$ where $L$ has more than one maximal element and $U$ has more than one minimal element. Let $i$ be one less than the minimum $x$-coordinate of a point in $U$, and let $j$ be one less than the minimum $y$-coordinate; then (because the coordinates of points in $P$ are their positions in the two orderings of a realizer) the set $L$ of points dominated by every point in $U$ equals the set of points below and to the left of $(i,j)$. Two of the four conditions of phase 2 are automatically met at~$(i,j)$: the points with $x$-coordinate $i+1$ and with $y$-coordinate $j+1$ are both in $U$ and are distinct because $U$ has more than one minimal point. The other two conditions must also be met, for if they were not then the point violating the condition would dominate $L$, contradicting the fact that all points that dominate $L$ belong to $U$.
\end{proof}

\begin{theorem}
A given partial order $P$ has a confluent diagram if and only if $\dim(P) \leq
2$. If $P$ has a confluent diagram,
the algorithm of Section \ref{sec:alg} computes a valid confluent diagram of
$P$, and embeds that diagram in a $O(n) \times O(n)$ grid in worst case optimal $O(n^2)$ time. The number of confluent junctions in the drawing is the minimum possible for any confluent diagram of $P$.
\end{theorem}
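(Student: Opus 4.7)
The plan is to assemble the theorem from the three preceding lemmas, then verify the algorithmic claims (runtime, grid size, optimality) by a direct inspection of the three phases.

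First I would dispatch the characterization. The ``only if'' direction is immediate from Lemma~\ref{lem:junctionToLattice}: any confluent diagram already exhibits a two-dimensional lattice containing~$P$, whence $\dim(P)\le 2$. For the ``if'' direction I would argue that phases 1--3 of the algorithm produce a valid confluent diagram whenever $\dim(P)\le 2$. Phase~1 embeds $P$ as a dominance order on grid points, using the realizer; phase~2 inserts exactly the junction points of the Dedekind--MacNeille completion by Lemma~\ref{lem:completion}; and phase~3 materialises every covering pair of the completion as a track segment. Because the combined point set is (by Lemma~\ref{lem:completion}) the completion $C$ of $P$, which by Novotn\'y's theorem is still two-dimensional, its Hasse diagram drawn by dominance is st-planar; smoothing the degree-$\ge 3$ completion vertices into tangent-sharing junctions and rotating $45^\circ$ then yields an upward confluent drawing whose vertex-to-vertex smooth paths realise exactly the covering pairs of $P$. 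Visibility from below for sources and from above for sinks is ensured by the invisible $(1,1)$ and $(2n+1,2n+1)$ points added in phase~2 when needed.

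Next I would verify the resource bounds. Phase~1 runs in $O(n^2)$ time by the cited algorithm of Ma and Spinrad, and the grid is $(2n{+}1)\times(2n{+}1)$ by construction. Phase~2 scans $O(n^2)$ odd grid positions and checks four $O(1)$ conditions at each, so is $O(n^2)$. For phase~3, the stack-based scan does $O(1)$ amortised work per push/pop plus $O(1)$ per emitted edge; since the drawing is planar (as argued above) the total number of edges is $O(n^2)$, giving $O(n^2)$ overall. Worst-case optimality follows by exhibiting a two-dimensional poset whose completion has $\Theta(n^2)$ elements---e.g.\ the ``standard example'' or a product of two antichains stacked to force $\Theta(n^2)$ cuts---so any algorithm that outputs the diagram must use $\Omega(n^2)$ time.

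Finally, minimality of the junction count follows from combining Lemmas~\ref{lem:junctionToLattice} and~\ref{lem:completion}. Any confluent diagram~$D'$ of $P$ yields via Lemma~\ref{lem:junctionToLattice} a two-dimensional lattice $C'\supseteq P$ whose non-$P$ elements (excluding a possibly-added top/bottom) are in bijection with the junctions of $D'$. Since the Dedekind--MacNeille completion~$C$ is the \emph{smallest} lattice containing~$P$, we have $|C\setminus P|\le |C'\setminus P|$, so $D'$ uses at least as many junctions as our drawing, which by Lemma~\ref{lem:completion} places exactly one junction per element of $C\setminus P$.

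The main obstacle will be the ``if'' direction---specifically, justifying that replacing the completion vertices by tangent-sharing junctions preserves the adjacency semantics of confluent drawings, i.e., that smooth upward paths in the resulting drawing correspond precisely to covering pairs of $P$ (no spurious smooth paths appear, and none are lost). This hinges on the combinatorial fact that each junction point sits at a cut whose upper and lower sides each contain two extremal elements, so every smooth traversal through a junction connects some element of the lower cut to some element of the upper cut, which is exactly the reachability pattern the covering structure of $P$ demands.
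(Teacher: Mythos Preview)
Your proposal follows the same architecture as the paper's proof—assembling the characterization from Lemmas~\ref{lem:junctionToLattice} and~\ref{lem:completion}, then checking the algorithmic claims phase by phase—and the minimality argument via the universal property of the Dedekind--MacNeille completion is exactly what the paper does. Two points deserve attention, however.

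First, for planarity you invoke Novotn\'y's theorem and the folklore fact that dominance drawings of two-dimensional lattices are crossing-free. The paper instead proves non-crossing directly and self-containedly: if two covering edges of the completion crossed, the cut $(L_1\cup L_3,\,U_2\cup U_4)$ formed from their lower and upper endpoints would be a completion element lying strictly between each pair, contradicting the covering relation. Your route is fine in spirit, but be careful: the grid coordinates produced by phase~2 are not literally a realizer of the completion (distinct junctions can share an odd column or row), so you cannot simply cite realizer-based dominance-drawing theorems; you need the lattice-join version of the argument, which is essentially the paper's cut argument anyway.

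Second, your proposed lower-bound witnesses are wrong. The ``standard example'' $S_n$ has dimension~$n$, not~$2$, so it is unusable here; and the product of two antichains is itself an antichain, whose completion adds only a top and a bottom. You need an explicit two-dimensional family whose completion has $\Theta(n^2)$ elements; the paper supplies one in Figure~\ref{fig:lowerBoundPoset}.
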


\begin{proof}
If a poset $P$ has dimension three or more, then so does any lattice containing it, and by Lemma~\ref{lem:planarDim2} and Lemma~\ref{lem:junctionToLattice} there can be no confluent diagram of $P$. Otherwise, we may assume that $P$ has dimension at most two.
\begin{figure}[t]
\centering
\includegraphics[width=2.25in]{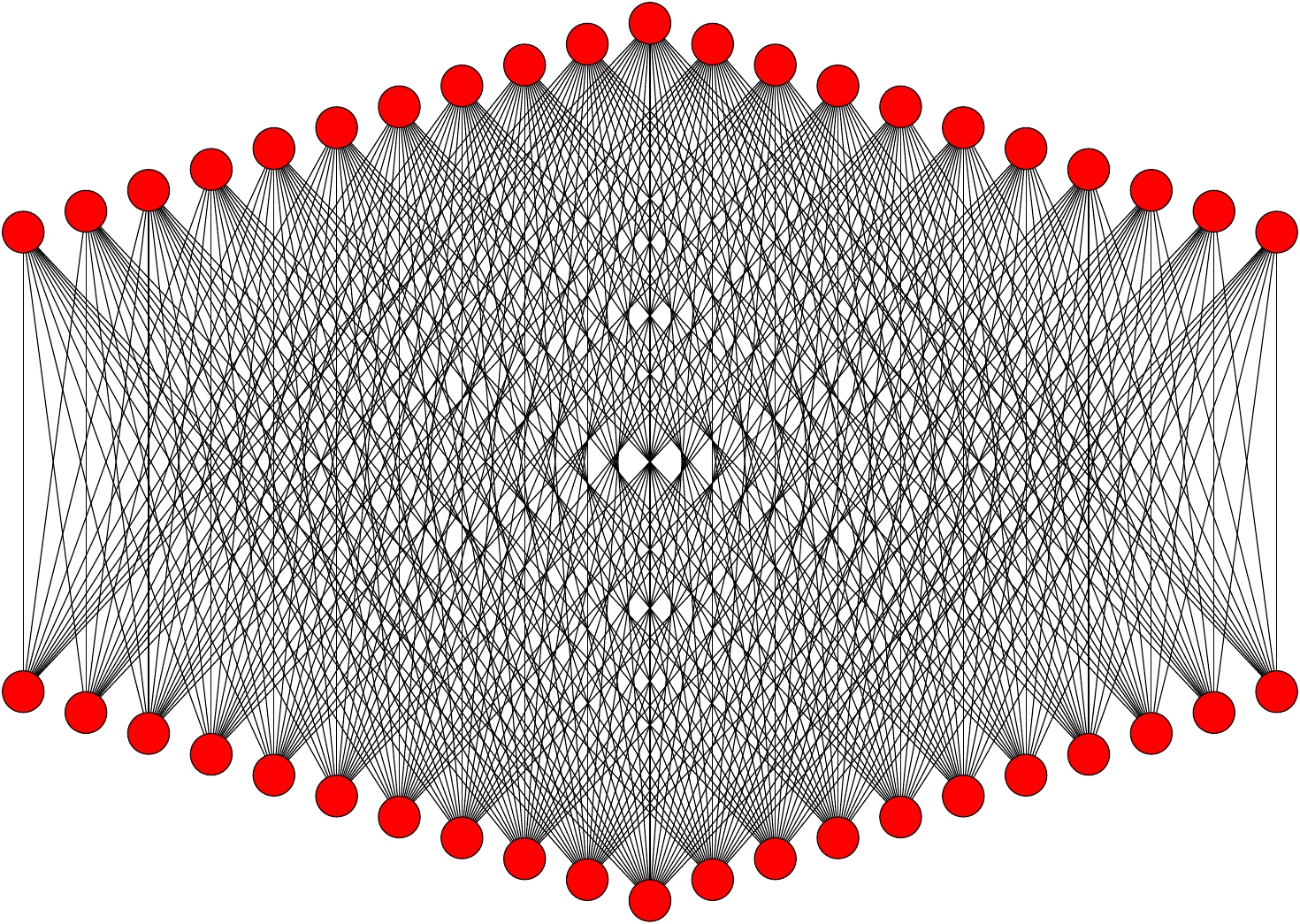}\quad
\includegraphics[width=2.25in]{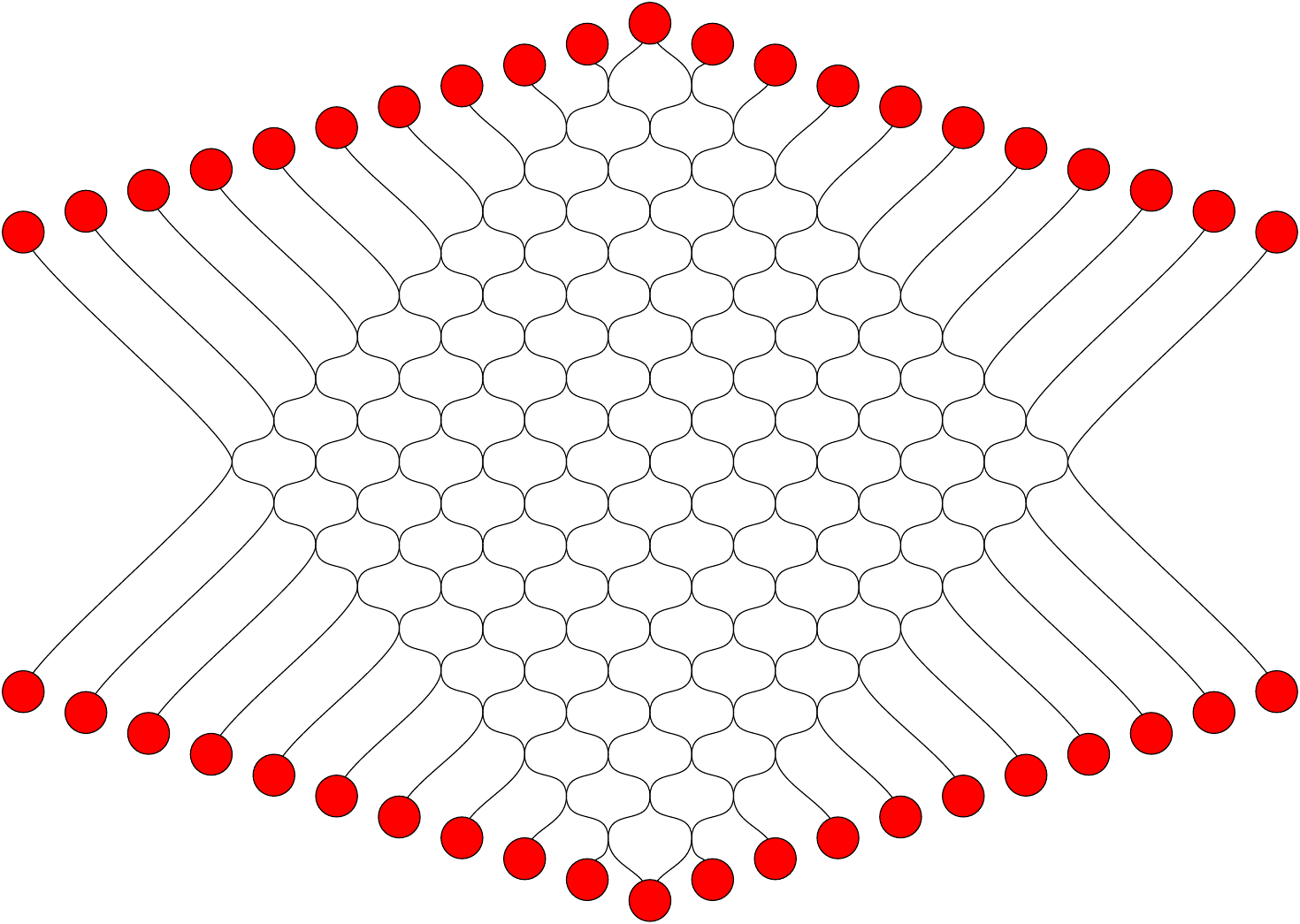} 
\caption{A poset $P$ with $O(n)$ elements and dimension 2 whose completion
has size $\Omega(n^2)$. On the left is the normal Hasse diagram, and on the
right is the confluent version as drawn by our algorithm. The two permutations $L_1$
and $L_2$ generating $P$ are the identity and the permutation
$(3n,3n-2,\dots,n;4n+1,n-1,4n,n-2,\dots,3n+2,0;3n+1,3n-1,\dots,n+1)$.}
\label{fig:lowerBoundPoset}
\end{figure}

By Lemma~\ref{lem:completion}, the dominance ordering on the points computed by our algorithm coincides (except possibly for the removal of the top and bottom elements) with the completion of $P$. In this set of points, there can be no crossing pairs of dominance relations, for if the edges $(L_1,U_1)$--$(L_2,U_2)$ and $(L_3,U_3)$--$(L_4,U_4)$ crossed (where $(L_i,U_i)$ is a cut either added in the completion or corresponding to an original point of $P$) then $(L_1\cup L_3,U_2\cup U_4)$ would also be a cut whose point would lie between the other four points, contradicting the assumption that these edges represent minimal dominance pairs.
Therefore, the diagram constructed by our algorithm is planar, and by Lemma~\ref{lem:planarDim2} it must represent a lattice superset of $P$. The  added elements belong to the completion, so the diagram must represent a subset of the completion, and since the completion has no proper lattice subsets it must represent the completion itself. The completion gives the minimum number of added elements (and therefore, by Lemma~\ref{lem:junctionToLattice}, the minimum number of junctions) of any diagram for $P$.

Our algorithm spends $O(n^2)$ time in its first two phases as it iterates over $O(n^2)$ grid cells spending constant time per cell. In the third phase, it uses constant time per edge and by planarity there are $O(n^2)$ edges, so the time is again $O(n^2)$. This time bound is optimal since (as shown in Figure~\ref{fig:lowerBoundPoset}) there exist two-dimensional posets whose completion has $\Omega(n^2)$ elements.
\end{proof}

Although our method produces drawings in a grid of linear dimensions, it may be possible in some cases to compact our drawings into a smaller grid. An algorithm of de la Higuera and Nourine~\cite{HigNou-TCS-97} may be used to find the smallest grid into which a drawing produced by our algorithm can be compacted.

Subsequent to our work, a different embedding into lattices has been applied by Cz\'edli~\cite{Cze-12} to characterize the partial orders of dimension two as being the posets with \emph{quasiplanar Hasse diagrams}, diagrams in which each incomparable elements has one element on a consistent side of all maximal chains through the other element. The lattices into which Cz\'edli embeds a partial order are semimodular, a property that does not hold for all two-dimensional lattices. Therefore, unlike the Dedekind--MacNeille completion that we use, these lattices do not have a minimal number of elements: a non-semimodular two-dimensional lattice will be drawn with no additional confluent junctions by our algorithm, but will be augmented by additional elements in the method of Cz\'edli.

\section{Confluent Drawings of Series-Parallel Posets}

\begin{figure}[t]
\centering\includegraphics[width=.5\textwidth]{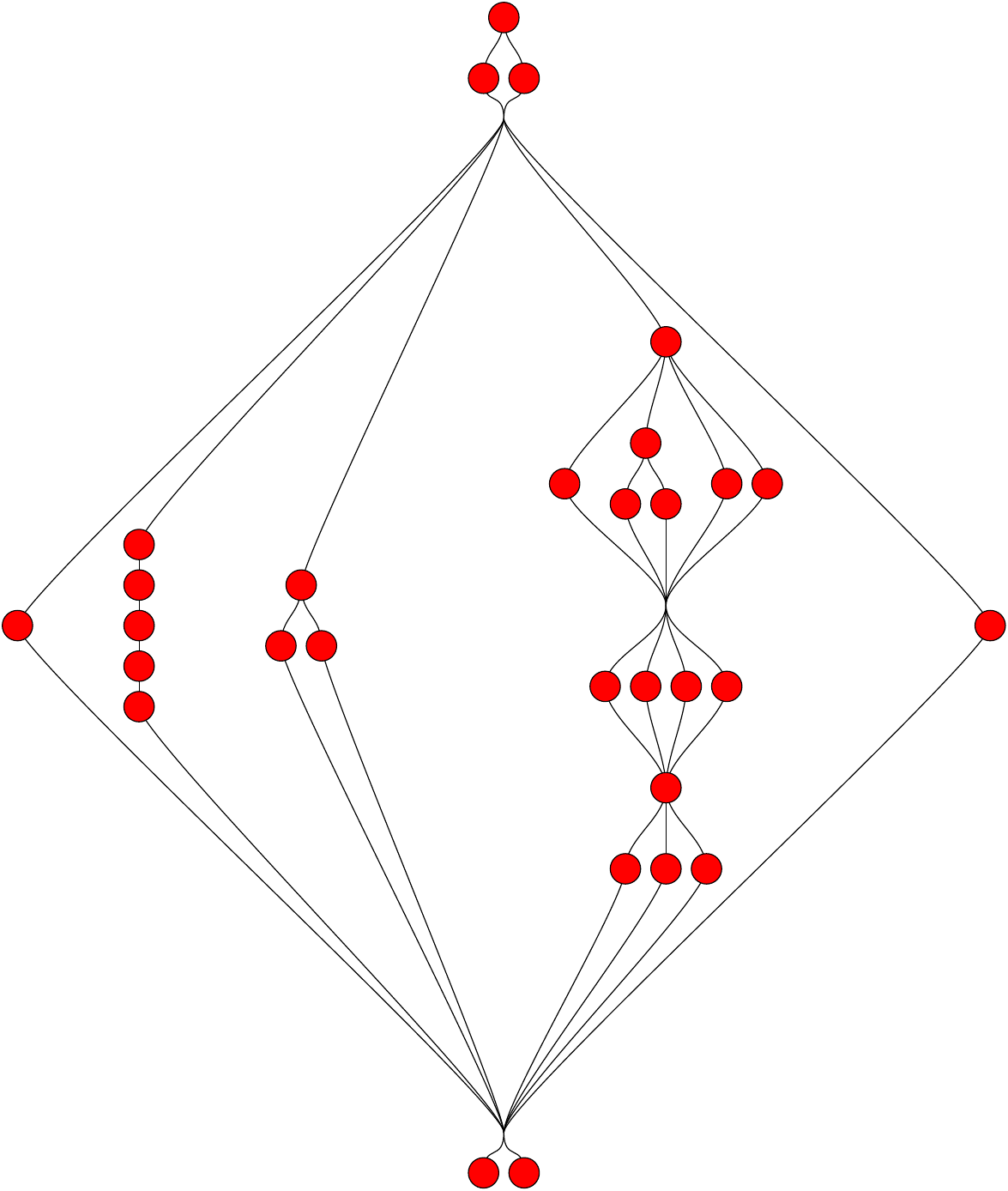}
\caption{A series-parallel poset.}
\label{fig:sp}
\end{figure}

A \emph{series-parallel partial order} is a poset that can be built up from single elements by two simple composition operations:
\begin{itemize}
\item The \emph{series composition} $P;Q$ of  posets $P$ and $Q$ is the order on the set $P\cup Q$ in which $p\le q$ for every $p\in P$ and $q\in Q$.
\item The \emph{parallel composition} $P||Q$ is the order on $P\cup Q$ in which
$p || q$ for every $p \in P$ and  $q \in Q$.
\end{itemize}

Pairs of elements that are both from $P$ or both from $Q$ retain their ordering in the larger set.


 Series-parallel partial orders are attractive because many important
 computational problems can be solved more easily in them than in more general
 posets, and because they have applications to a wide variety of problems
 including scheduling \cite{Mohring1999}, concurrency \cite{Lodaya1998}, 
 data mining \cite{Mannila2000},  networking
 \cite{Amer1994}, and more (see \cite{Mohring}).

Series-parallel partial orders can be represented naturally by a binary
tree, known as a decomposition tree of the order. The leaves of the tree correspond
to single element sets and the internal nodes of the tree correspond to series
or parallel composition operations.  
As the following theorem shows, given a decomposition tree $T$ for a
series-parallel partial order $P$, we can draw the confluent diagram of $P$ in
linear time by traversing $T$, performing the corresponding composition
operations, and inserting confluent junctions when necessary. 
\begin{figure}[t]
\centering
\includegraphics[width=0.8\linewidth]{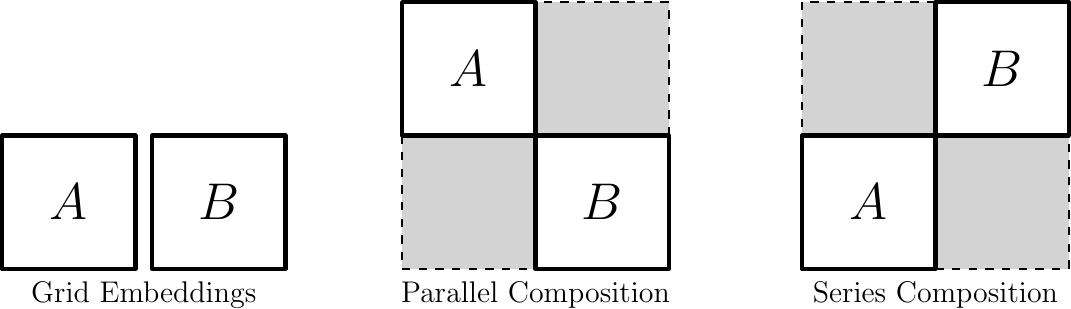}
\caption{Series and parallel composition operations on two drawings $A$ and $B$.
\label{fig:comps}}
\end{figure} 
\begin{theorem}
Let $P$ be a series-parallel partial order, given as its decomposition tree. Then a confluent diagram of $P$ with a
linear number of junctions  can be drawn in an $O(n) \times O(n)$ grid in linear time.
\end{theorem}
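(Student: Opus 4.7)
The plan is to construct the diagram by a bottom-up recursion on the decomposition tree $T$, maintaining for every subtree a rectangular drawing with a tidy boundary interface. For each node $v$ of $T$ the recursion will produce a confluent diagram $D_v$ of the subposet $P_v$ contained in an axis-aligned rectangle $R_v$ of width $w_v$ and height $h_v$, and I will maintain the invariants: (i) $w_v \le n_v$ and $h_v \le n_v + c\cdot s_v$, where $n_v=|P_v|$, $s_v$ is the number of series nodes in the subtree rooted at $v$, and $c$ is an absolute constant; (ii) the minimal (resp.\ maximal) elements of $P_v$ lie on the bottom (resp.\ top) edge of $R_v$ in a fixed left-to-right order, with nothing obstructing upward/downward visibility from them; and (iii) the total number of confluent junctions and track segments in $D_v$ is $O(n_v)$.

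The base case is a single element placed in a $1\times 1$ grid, serving simultaneously as minimum and maximum. For a parallel composition node with children $P$ and $Q$, I would place $R_P$ and $R_Q$ side by side with their bottom edges aligned (padding the shorter rectangle vertically on top), giving $w_{P||Q}=w_P+w_Q$ and $h_{P||Q}=\max(h_P,h_Q)$; since parallel composition introduces no new cover relations between $P$ and $Q$, no new junctions or tracks are needed. For a series composition node $P;Q$, I would center $R_Q$ directly above $R_P$, separated by a constant-height buffer strip, and place a single confluent junction $j$ in this strip. Because the new cover relations of $P;Q$ form exactly the complete bipartite set $\max(P)\times\min(Q)$, they can be represented confluently by drawing a track from $j$ downward to each element of $\max(P)$ and upward to each element of $\min(Q)$; in the degenerate case $|\max(P)|=|\min(Q)|=1$, the junction collapses to a plain cover edge.

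A straightforward induction on $T$ then verifies the three invariants. Width is trivially preserved because both compositions behave identically on $w$ and $n$; the height bound absorbs the constant buffer of each series composition into the $c\cdot s_v$ summand, and since $s_v<n_v$ this gives $h_v=O(n_v)$. The junction count increases by at most one per series node of $T$, so the total is at most $n-1$. Linear running time follows from computing subtree sizes in a bottom-up pass and assigning absolute grid coordinates in a subsequent top-down pass, each pass costing $O(1)$ per node of $T$.

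The main technical obstacle is to show that the tracks incident to the series junction $j$ can be realized as smooth non-crossing curves sharing a common vertical tangent at $j$, without intruding into the interior of $R_P$ or $R_Q$. This is where the interface invariant pays off: the targets of these tracks lie exclusively on two horizontal boundary segments (the top edge of $R_P$ and the bottom edge of $R_Q$) in a fixed left-to-right order, and the buffer strip supplies a full horizontal band of free space in which to route the curves. Each track may be taken to descend vertically from $j$ and then curve once to land at its target column, with left-to-right nesting preventing crossings, after which it enters $R_P$ (or $R_Q$) at its target column only, avoiding existing track segments. Combined with Lemma~\ref{lem:junctionToLattice}, which guarantees that any such upward confluent drawing faithfully represents $P$, this completes the argument for all the claimed parameters.
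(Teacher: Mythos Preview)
Your invariant (ii) is not preserved by your parallel-composition step, and the later routing argument depends on it. When you place $R_P$ and $R_Q$ side by side with bottoms aligned and $h_P\neq h_Q$, the maximal elements of the shorter block sit strictly below the top edge of $R_{P||Q}$; ``padding the shorter rectangle vertically on top'' enlarges the bounding box but does not move those elements. Hence after $P||Q$ the maxima are at assorted heights, not ``on the top edge of $R_v$'' as (ii) asserts. You then invoke (ii) verbatim in the series step (``the targets of these tracks lie exclusively on two horizontal boundary segments (the top edge of $R_P$ \ldots)''), so the non-crossing argument for the fan of tracks below the junction is unsupported exactly when the lower child of a series node is, or contains, a parallel composition of unequal heights. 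A smaller imprecision: the cost at a series node is $\Theta(|\max P|+|\min Q|)$, not $O(1)$, though an amortized argument (each element is consumed as a max and as a min at most once) still gives linear total time.

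The gap is repairable: replace (ii) by the weaker statement that maxima occupy distinct columns and that each such column is empty above its maximum up to the top of $R_v$ (and symmetrically for minima); this \emph{is} preserved by both of your compositions, and it suffices to route each series track as ``fan out to the target column inside the buffer, then descend vertically through the empty column.'' For comparison, the paper sidesteps the whole issue by composing both in series and in parallel \emph{corner-to-corner} in a dominance layout (rotated $45^\circ$ at the end): the series junction then sits at the shared corner, every maximum of $A$ is automatically directly dominated by it, and the junction is inserted only when $|\max A|>1$ and $|\min B|>1$. Your stacked layout is a legitimate alternative, but it needs the corrected visibility invariant rather than the edge-membership invariant you stated.
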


\begin{figure}[t]
\centering
\includegraphics[width=0.8\linewidth]{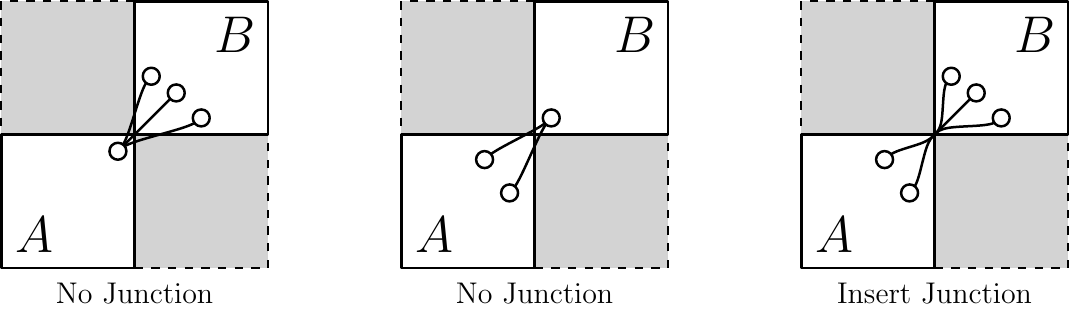}
\caption{Series
composition $A;B$ has a confluent junction
if and only if $A$ has no unique upper bound and $B$ has no unique lower
	bound. 
\label{fig:addConfluent}}
\end{figure}

\begin{proof}
We traverse the decomposition tree in post-order, recursively
finding embeddings for each subtree.
For each tree node, we do the following:
\begin{enumerate}
\item 
	If the node is a leaf, then we embed the corresponding element in a single
	grid cell
\item
	Otherwise, if the node is a series or parallel node, then we translate the grid embeddings of its two children so that their bounding boxes meet corner to corner (Figure \ref{fig:comps}).
\item
	For a series composition $A;B$ we also
	insert a confluent junction at the shared corner of $A$ and $B$  if and only if $A$ has more than one
	maximal element and $B$ has more than one minimal element (Figure
	\ref{fig:addConfluent}). 
\end{enumerate}
By using a linked list of the maximal and minimal nodes for the current
subtrees, we can perform these operations in time proportional to the number of leaves in
the decomposition tree. Therefore the total time is linear.
The size of the grid will be proportional to the size of the decomposition tree, i.e., $O(n) \times O(n)$
\end{proof}

\section{Experiments}

As a proof of concept for our method, we implemented it and tested how well it performs, in terms of the number of edges or confluent segments drawn and the ink usage of our drawings.

In our experiments,
we consider drawing two classes of partial orders separately, first, 
the special case of series-parallel partial orders and second, 
all two-dimensional partial orders. 
We consider several different sizes, and for each size and class we generate
graphs of that class and size uniformly at random. 
We calculate
the number of edges and total edge length (ink) in the traditional Hasse
diagram and confluent Hasse diagram corresponding to each graph. In the
traditional Hasse diagram, each edge is drawn as a straight line between two
vertices. In the
confluent diagram, an ``edge'' between two vertices may go through multiple
confluent junctions, and multiple ``edges'' may reuse the same curve incident to
a junction. Thus, we count the number of edges in the confluent diagram as the number of
confluent segments; we define a
\emph{confluent segment} as a curve
between \emph{endpoints}, where each endpoint is either a vertex or a confluent
junction. Each segment is
drawn as a cubic bezier curve, but for practical reasons we approximate its length 
as the length of the three line segments through its control points
(Figure~\ref{fig:inkMeasure}). Note that
this measure will never underestimate the ink used by any edge in our confluent diagram.
Because of the quadratic growth in the output complexity of some of our drawings, we limited our experiments  to graphs of at most 2048 vertices.
For each of the smaller graph sizes, 10,000 permutations were generated uniformly at random. 
For the larger sizes, we only generated 1000, because of the long run-times,
and the low variance of the results.

\begin{figure}[t]
\centering
\includegraphics[width=0.2\linewidth]{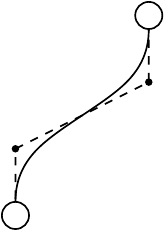}
\caption{Ink used by each cubic bezier segment is approximated by the straight-line
    path through its control points.
\label{fig:inkMeasure}}
\end{figure}

In Figure~\ref{fig:ratio-ser-num-ln} and Figure~\ref{fig:ratio-ser-ink-ln}, 
we compare the average edge count and average ink used for 
traditional and confluent Hasse diagrams of series-parallel partial orders. 
The result is that confluent drawings are consistently better 
than the traditional Hasse diagram in both number of edges and ink used for this class of inputs.

In Figures~\ref{fig:ratio-rand-num} and~\ref{fig:ratio-rand-ink-ln}, 
we compare the average edge count and average ink used for 
traditional and confluent Hasse diagrams 
of two-dimensional partial orders.
The result fpr these inputs is that on average the confluent Hasse diagram uses substantially less ink than the
traditional Hasse despite the fact that it contains many more edges. Thus, although the
confluent Hasse diagram is more complex to render it is also dramatically easier to read.  

The reason random two-dimensional partial orders use more edges in a confluent drawing than in a traditional Hasse diagram is that they have a large number of confluent junctions.
As we show below, for two-dimensional partial orders generated
uniformly at random, the expected number of edges in a Hasse diagram is $\Theta(n
\log n)$, but the expected number of confluent junctions in a confluent diagram is
$\Theta(n^2)$. The large number of confluent junctions necessarily implies 
a drawing with a large number of edges. However, each confluent junction reduces
the visual clutter of at least one edge crossing. Thus, while a large number of junctions
indicates a drawing with a large number of edges, it also indicates a drawing 
that is substantially easier to read than the corresponding traditional Hasse
diagram. 

\paragraph{Graph Generation}
Input graphs were sampled uniformly at random from the set of all input graphs
for each given size and type (two-dimensional or series-parallel partial
orders). 

We generate each two-dimensional partial order of size 
$n$ by generating a permutation on $n$ elements uniformly at random. 
Each permutation $\pi$ maps the set of elements $L_n = [1,n]$ in sorted order to some
other order $\pi(L_n)$. Thus, we have a pair of linear orders $L_n,
\pi(L_n)$ which define a
partial order, since each two-dimensional partial order 
is realized by pair of linear orders, and by relabeling the elements,
any pair of size $n$ linear orders corresponds to $L_n$ and some
permutation of $L_n$.

The set of series-parallel partial orders is a subset of the two-dimensional
partial orders. Thus, we generate each series-parallel partial order by sampling
uniformly at random from only those permutations that correspond to a
series-parallel partial order. Such a permutation can be decomposed uniquely into series and parallel compositions with the constraint that the left argument of each series decomposition is parallel (or an atom) and the left argument of each parallel decomposition is series (or an atom). By means of this decomposition, we may count the number of permutations whose outer composition operation is series (so that the Hasse diagram is connected) by the recurrence relation
$$C_n=C_{n-1}+\sum_{i=1}^{n-2} 2C_i C_{n-i}.$$
The numbers generated by this recurrence are called the little Schroeder numbers. Here $i$ is the size of the left argument of the outer series composition, $n-i$ is the size of its right argument, and the factor of $2$ accounts for the choice of whether to use a series or parallel composition in the right argument. When the right argument has only one element (i.e., when $n-i=1$), this choice is irrelevant, so the term for $i=n-1$ omits the factor of two and is pulled out of the sum as $C_{n-1}$.
Our algorithm for generating a random order chooses a random integer in the range from $0$ to $C_n-1$, compares it to the partial sums of the terms on the right hand side of the recurrence to determine which value of $i$ to use, and returns the concatenation of two randomly generated permutations of sizes $i$ and $n-i$ (with the first of these two permutations always reversed so that its outer composition operation is parallel rather than series, and the second reversed with probability $1/2$).

\paragraph{Expected edge count.}
Eckhardt~\emph{et~al.}~\cite{emn-flcacd-07}
 show that the expected number of edges in a transitively reduced digraph is
$\Theta(n \log n)$ in a random graph model where each edge is included in the graph with
probability $p$. Our model for generating the graphs is somewhat different, but
leads to the same asymptotic bound.
There is a clear bijection between any permutation on $n$ elements and a
two-dimensional partial order of $n$ elements. Thus, we generate a
permutation uniformly at random, which corresponds exactly to a two-dimensional partial
order. Under this model each element in the partial order 
has a pair of coordinates equal to the index
and value of the corresponding element in the permutation.   
There is an edge $(u,v)$ in the Hasse diagram if and only if vertex $v$ covers vertex $u$.
That is, $v$ dominates $u$ 
($\textit{u.index} < \textit{v.index}$ and $\textit{u.value} < \textit{v.value}$), and there
is no third vertex $z$ such that $z$ dominates $u$ and $v$ dominates $z$.

Let $e_k$ be the element with index $k$ in column $k$. For each element $e_{j+k}$ in indices
$j+k$, $j \in [1,n-k]$, $e_j$ covers $e_k$ if it is the successor of $e_k$ among the $j+1$
elements in indices $[k, k+j]$ ordered by value. Thus, the probability that $e_{j+k}$ covers
$e_k$ is $1/(j + 1)$. 

Let $C_k$ denote the number of elements which cover element $e_{n-k}$.
\[
E[C_k] = \sum_{j=1}^k \frac{1}{j + 1} = \sum_{i=2}^{k+1} \frac{1}{i} =
         \sum_{i=1}^{k+1} \frac{1}{i} -1 = H_{k+1 - 1}
\]

Thus, by linearity of expectation, the expected number of edges in a traditional drawing of a random two-dimensional partial order is
\[
\sum_{k=0}^{n-1} E[C_{k}] = \sum_{k=0}^{n-1} H_{k+1} - 1 = \sum_{k=1}^{n} H_k - 1 =
\Theta(n \log n) 
\]

\paragraph{Expected number of confluent junctions.}
Note that each even row and each even column in $[2, 2n]$ 
in the grid contains exactly one point.
Let $x_j$ denote the $x$-coordinate of the poset point with $y$-coordinate $j$ and 
let $y_i$ denote the $y$ coordinate of the poset point with $x$-coordinate $i$. 
Let $p^x_j = (x_j, j)$ be the unique point with $y$-coordinate $j$, and let
$p^y_i =(i,y_i)$ be the unique point with $x$-coordinate $i$.

Since we generated the
coordinates of the vertices uniformly at random, we can view the $y$ coordinates of the
vertices as $n$ uniformly random samples $s_k$ without replacement from the integer range
$[1,n]$. Thus, the $k$th vertex in the graph is at position $2k, y_{2k}$, given by the
value of the $k$th sample $s_k = y_{2k}$.

Now consider the probability that a point in a particular row $j$ has a certain
$x$ coordinate $x_j = i$:
\[
    Pr(x_j = i) = Pr(y_i = j)
\]
That is, the point in row $j$ has $x$-coordinate $i$ if and only if the unique
point in column $i$ (the $\frac{i}{2}th$ sample) has $y$-coordinate $j$.
Hence, given that $p_j \neq q_i$, the values $x_j$ and $y_i$ are independent.

By construction, for each odd $(i,j)$ in $[3, 2n - 1]$ 
there exists a confluent junction at position $(i,j)$ if and only if
all of the following conditions hold:
\begin{subequations}
    \begin{align} 
        y_{i-1} &< j-1  \\
        y_{i+1} &> j+1  \\
        x_{j-1} &< i-1 \\
        x_{j+1} &> i+1
     \end{align}
 \end{subequations}
Note that since all the coordinates are integers, 
we can equivalently state these conditions as follows
\begin{subequations}
	\begin{align} 
         y_{i-1} < j &\quad \text{and}\quad  y_{i-1} \neq j-1 \label{eqn:ym}\\
         y_{i+1} > j &\quad \text{and}\quad  y_{i+1} \neq j+1 \label{eqn:yp}\\
         x_{j-1} < i &\quad \text{and}\quad  x_{j-1} \neq i-1 \label{eqn:xm}\\
         x_{j+1} > i &\quad \text{and}\quad  x_{j+1} \neq i+1 \label{eqn:xp}
	\end{align}
\end{subequations}

Moreover, the inequality constraint in 
equation~\ref{eqn:ym} is satisfied if and only if the inequality constraint in
equation~\ref{eqn:xm} is satisfied, since there is exactly one vertex in each
even row and column. Likewise, the inequality constraints in
equations~\ref{eqn:yp} and~\ref{eqn:xp} are equivalent. Hence, we need only keep
one of the inequality constraints from each pair of equations, and the inequality
constraints can be equivalently stated: 
$p^x_{j-1} \neq p^y_{i-1}$ and 
$p^x_{j+1} \neq p^y_{i+1}$.

Thus, there exists a junction at $(i,j)$ if and only if
\begin{align*}
     x_{j-1} < i < x_{j+1} &\quad  \text{and} \\
    y_{i-1} < j < y_{i+1}  &\quad  \text{and}  \\
     p^x_{j-1} \neq p^y_{i-1}  &\quad  \text{and} \\ 
    p^x_{j+1} \neq p^y_{i+1}
\end{align*}
That is, the three $x$-coordinates must be in a specific order, 
the three $y$-coordinates must be in a specific order, and there is a $1/n$
fraction of forbidden coordinates in each of the inequality constraints. 

Let $i$ and $j$ be chosen
independently and uniformly at random.
Then, the probability 
that $(i,j)$ has a confluent junction is

\begin{align*}
    & Pr\left(  p^x_{j-1} \neq p^y_{i-1} \right)
     \cdot Pr \left( p^x_{j+1} \neq p^y_{i+1} \mid p^x_{j-1} \neq p^y_{i-1}  \right) \\
    & \cdot Pr\left( x_{j-1} < i < x_{j+1} \mid 
                        p^x_{j-1} \neq p^y_{i-1}, 
                        p^x_{j+1} \neq p^y_{i+1}  
                        \right )\\
    & \cdot Pr\left(y_{i-1} < j < y_{i+1} \mid 
                                p^x_{j-1} \neq p^y_{i-1}, 
                                p^x_{j+1} \neq p^y_{i+1},
                                x_{j-1} < i < x_{j+1}
                                \right) \\
     &= (1 - \frac{1}{n})^2 \cdot\frac{1}{6}\cdot\frac{1}{6}
\end{align*}

Thus, by linearity of
expectation, the total expected
number of confluent junctions over the whole grid, in a confluent drawing of a random two-dimensional partial order, is $\Theta(n^2)$.

\begin{figure}[htp]
    \vskip-1em
\centering
\includegraphics[height=.4\textheight]{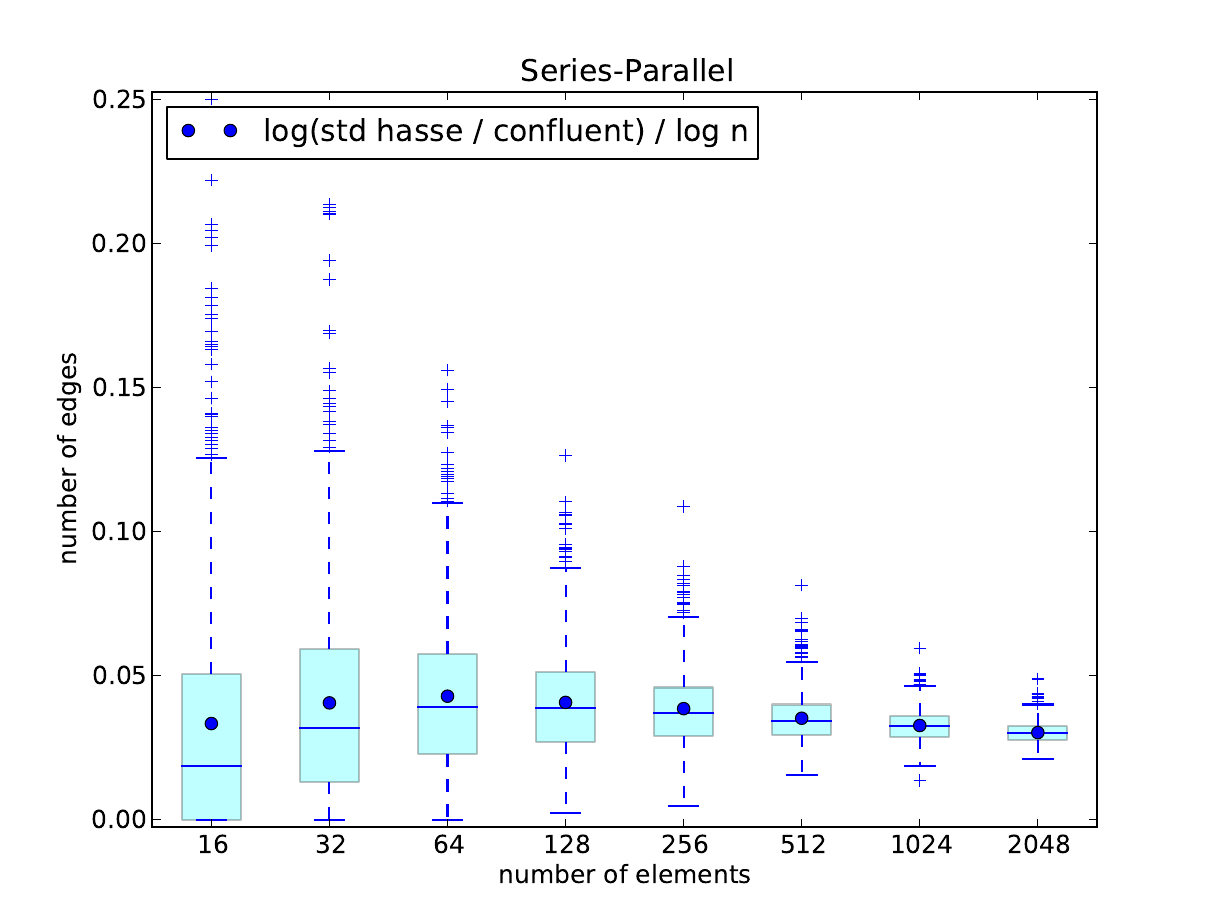}
\caption{\label{fig:ratio-ser-num-ln} 
    A log-log box plot of the ratio of the number of edges in a 
  traditional Hasse diagram to the number of edges in a
  confluent Hasse diagram as a function of the number of vertices
  in upward drawings of series-parallel partial orders. The ratio is normalized
  by dividing by $\log n$.
}
\includegraphics[height=.4\textheight]{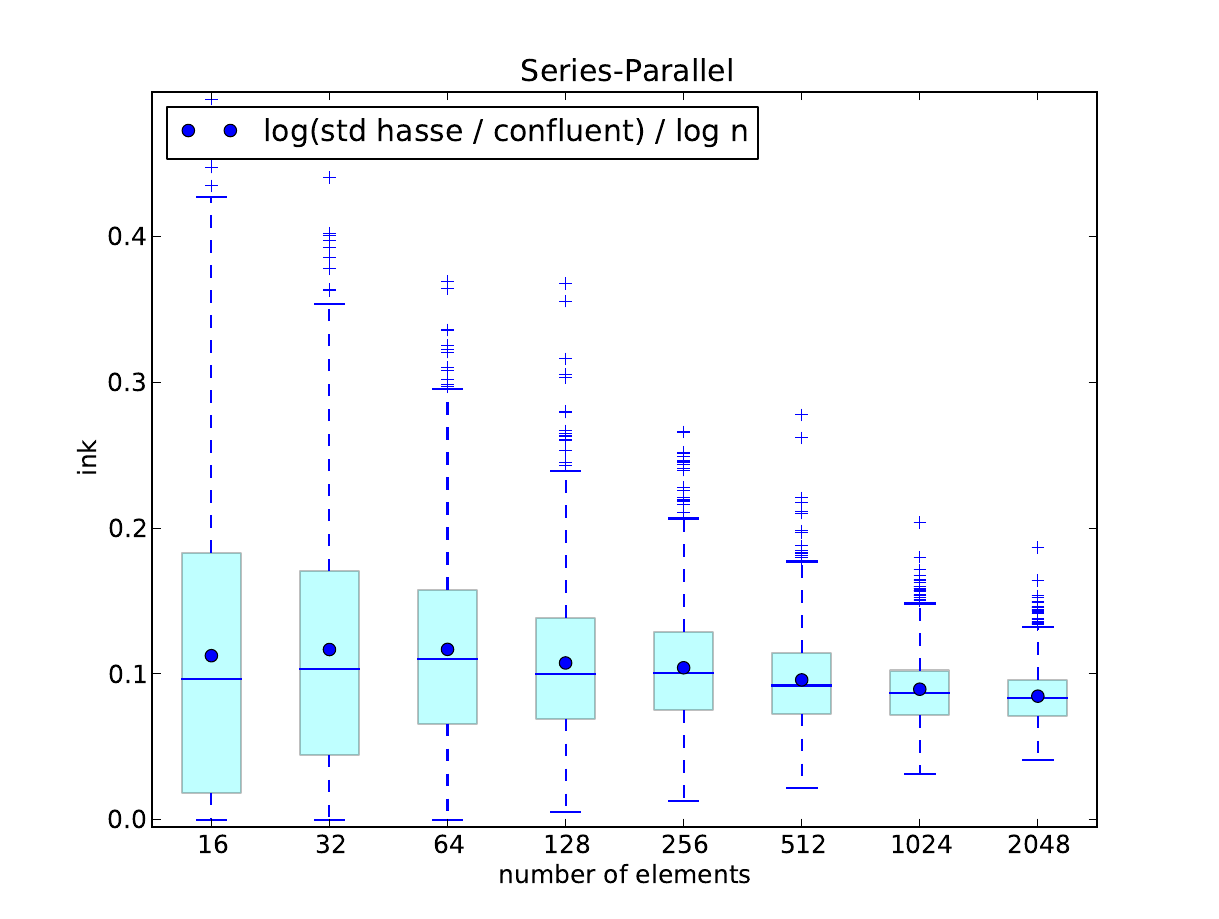}
\caption{\label{fig:ratio-ser-ink-ln} 
    A log-log box plot of the ratio of the ink used in a 
  traditional Hasse diagram to the ink used in a
  confluent Hasse diagram as a function of the number of vertices
  in upward drawings of series-parallel partial orders. The ratio is normalized
  by dividing by $\log n$.
  }
\end{figure}

\begin{figure}[htp]
    \vskip-1em
\centering
\includegraphics[height=.4\textheight]{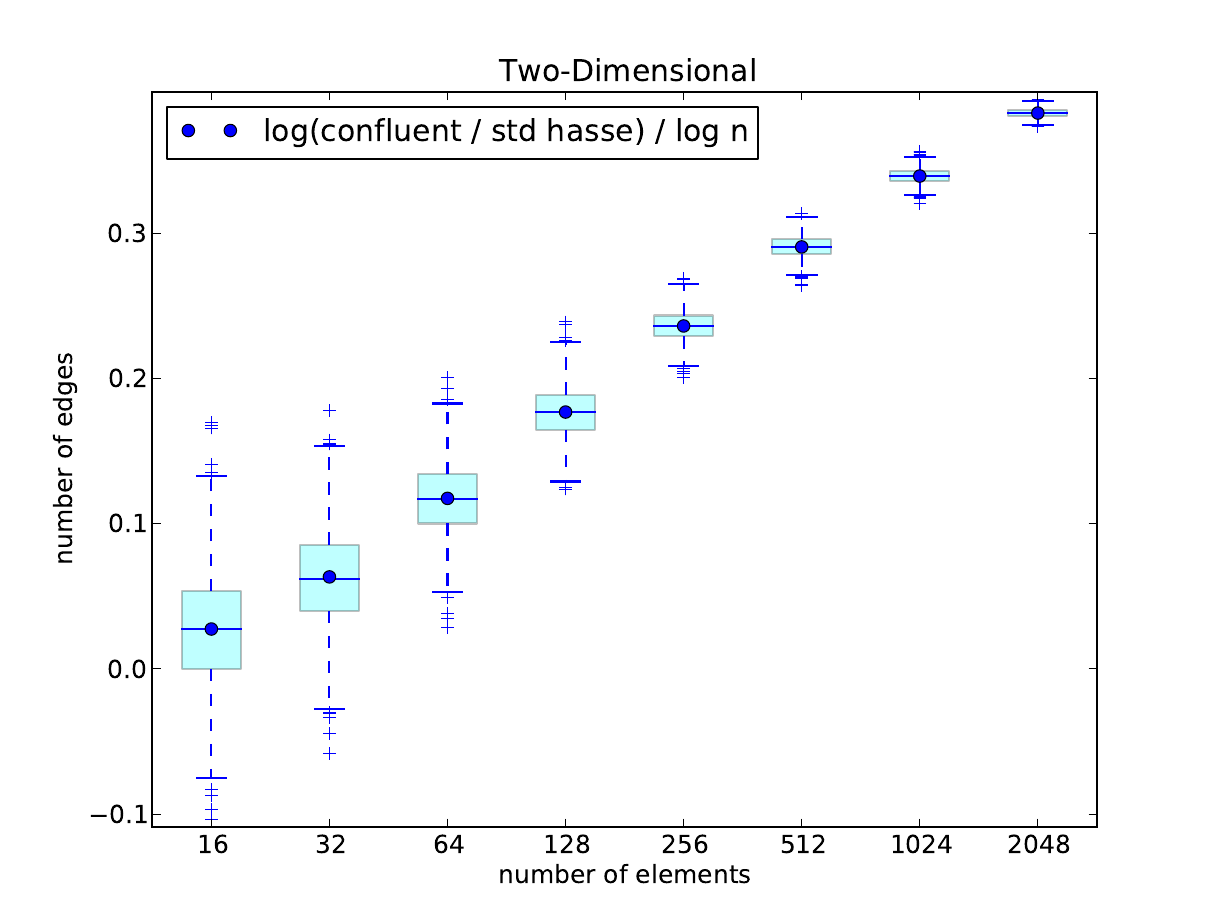}
\caption{\label{fig:ratio-rand-num} 
  A log-log box plot of the ratio of the number edges in a 
  confluent Hasse diagram to the number of edges in a
  traditional Hasse diagram as a function of the number of vertices
  in upward drawings of two-dimensional partial orders. The ratio is normalized
  by dividing by $\log n$.
}
\includegraphics[height=.4\textheight]{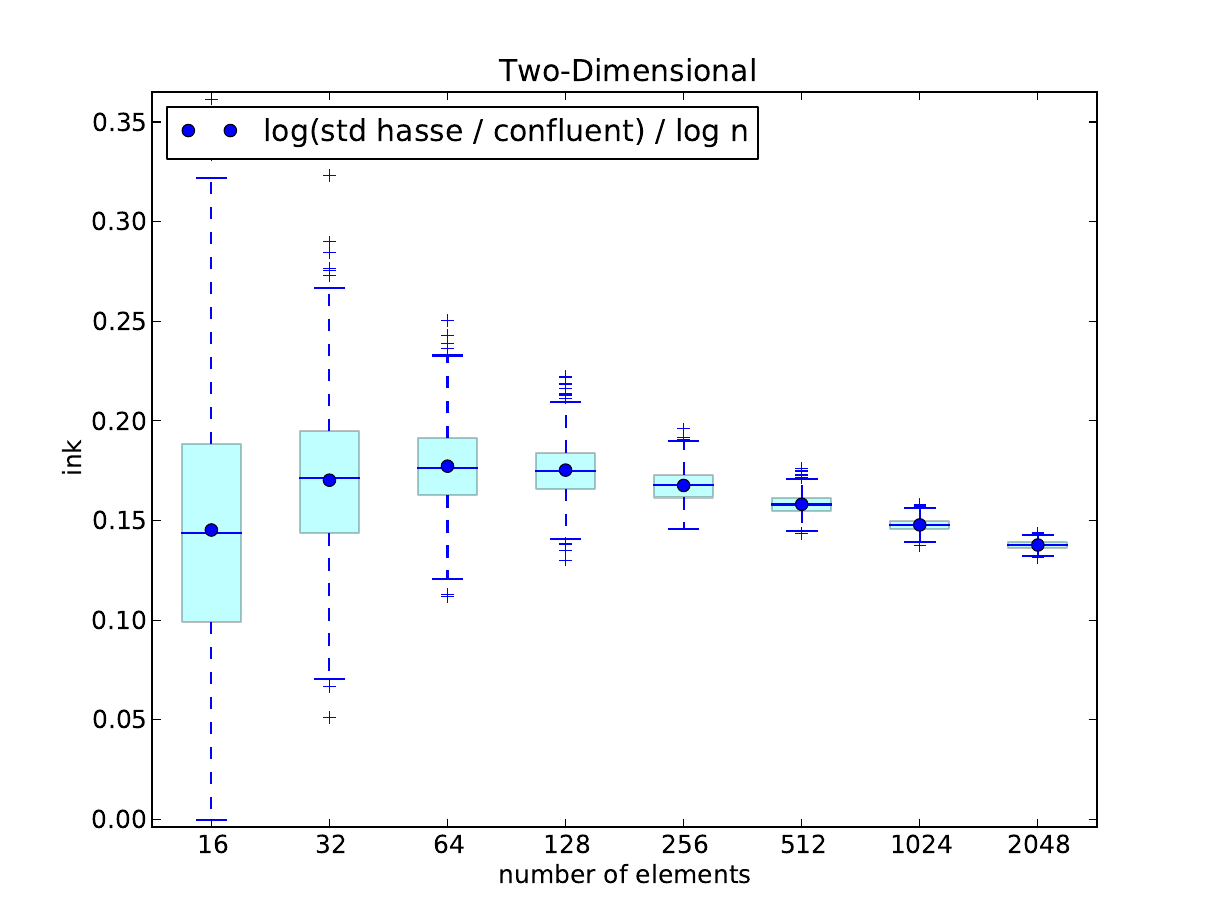}
\caption{\label{fig:ratio-rand-ink-ln} 
    A log-log box plot of the ratio of the ink used in a 
  traditional Hasse diagram to the ink used in a
  confluent Hasse diagram as a function of the number of vertices
  in upward drawings of two-dimensional partial orders. The ratio is normalized
  by dividing by $\log n$.
}
\end{figure}

\section{Conclusions}

We have designed, analyzed, and implemented an algorithm for drawing confluent Hasse diagrams using a minimum number of confluent junctions.
We experimentally verified that confluent diagrams consistently use less ink
than the corresponding traditional Hasse diagrams of both two-dimensional and
series-parallel partial orders. Confluent diagrams of series-parallel partial
orders also use fewer edges. Confluent diagrams of two-dimensional partial
orders often use substantially more edges than the corresponding traditional
Hasse diagram. However, the larger number of edges used is required by the
larger number of confluent junctions required to address all the edge crossings
in these graphs. The result is a drawing with more edges but substantially less visual clutter.

Upward planarity may be tested even for non-st-planar graphs that have only one source or one sink; can similar conditions be extended to the case of upward confluent drawings? 
Can we efficiently find upward planar drawings of graphs that are not transitively reduced? If a partially ordered set must be drawn with crossings, can we use confluence in a principled way to keep the number of crossings small? We leave these questions to future research.

\clearpage 
\subsubsection*{Acknowledgements}

This work was supported in part by NSF grants
0830403 and 1217322 and by the Office of Naval Research under grant
N00014-08-1-1015.

\ifFull
{
\raggedright
\bibliographystyle{abuser} 
\bibliography{poset}
}\else
{\raggedright
\bibliographystyle{splncs03}
\bibliography{poset}}
\fi

\end{document}